\documentclass{IEEEtran4PSCC}
\ifCLASSINFOpdf
   \usepackage[pdftex]{graphicx}
\else
   \usepackage[dvips]{graphicx}
\fi
\usepackage[cmex10]{amsmath}

\usepackage{cite}
\usepackage{amsthm}

\usepackage{amsmath,amssymb,amsfonts}
\usepackage{algorithmic}
\usepackage[ruled, lined, linesnumbered, commentsnumbered]{algorithm2e}
\usepackage{graphicx}
\usepackage{textcomp}
\usepackage{xcolor}
\usepackage{xspace}
\usepackage{nomencl}
\usepackage{glossaries} 
\usepackage{tikz}
\usetikzlibrary{calc}
\usepackage{ifthen}
\usepackage{siunitx}
\usepackage{booktabs}
\usepackage{pgfplots}
\usepackage{url}
\usepackage{multicol}
\usepackage{booktabs} 
\makeatletter
\let\old@ps@headings\ps@headings
\let\old@ps@IEEEtitlepagestyle\ps@IEEEtitlepagestyle
\def\psccfooter#1{%
    \def\ps@headings{%
        \old@ps@headings%
        \def\@oddfoot{\strut\hfill#1\hfill\strut}%
        \def\@evenfoot{\strut\hfill#1\hfill\strut}%
    }%
    \def\ps@IEEEtitlepagestyle{%
        \old@ps@IEEEtitlepagestyle%
        \def\@oddfoot{\strut\hfill#1\hfill\strut}%
        \def\@evenfoot{\strut\hfill#1\hfill\strut}%
    }%
    \ps@headings%
}
\makeatother

\psccfooter{%
        \parbox{\textwidth}{\hrulefill \\ \small{24th Power Systems Computation Conference} \hfill \begin{minipage}{0.2\textwidth}\centering \vspace*{4pt} \includegraphics[scale=0.06]{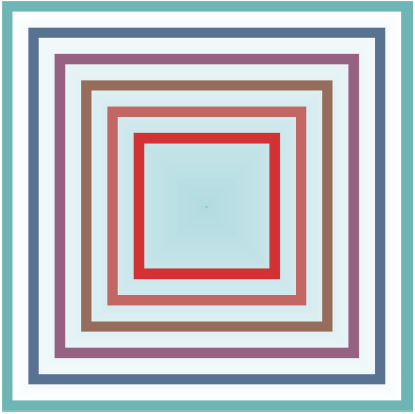}\\\small{PSCC 2026} \end{minipage} \hfill \small{Limassol, Cyprus --- June 8 -- June 12, 2026}}%
}

\let\originalleft\left
\let\originalright\right
\renewcommand{\left}{\mathopen{}\mathclose\bgroup\originalleft}
\renewcommand{\right}{\aftergroup\egroup\originalright}

  \newacronym[description={Markov-Chain Monte Carlo}]{MCMC}{MCMC}{Markov-Chain Monte Carlo Algorithm}

\newtheorem{proposition}{Proposition}

\begin{document}
%


\title{
Defending the power grid by segmenting the EV charging cyber infrastructure
}

\author{
\IEEEauthorblockN{Kirill Kuroptev, Florian Steinke}
\IEEEauthorblockA{Energy Information Networks {\&} Systems \\
Technical University of Darmstadt\\
Darmstadt, Germany\\
\{kirill.kuroptev,  florian.steinke\}@eins.tu-darmstadt.de}
\and
\IEEEauthorblockN{Efthymios Karangelos}
\IEEEauthorblockA{School of Electrical and Electronic Engineering \\
University College Dublin\\
Dublin, Ireland\\
efthymios.karangelos@ucd.ie}
}

\maketitle

\begin{abstract}

This paper examines defending the power grid against load-altering attacks using electric vehicle charging.
It proposes to preventively segment the cyber infrastructure that charging station operators (CSOs) use to communicate with and control their charging stations, thereby limiting the impact of successful cyber-attacks. 
Using real German charging station data and a reconstructed transmission grid model, a threat analysis shows that without segmentation, the successful hack of just two CSOs can overload two transmission grid branches, exceeding the N-1 security margin and necessitating defense measures. 
A novel defense design problem is then formulated that minimizes the number of imposed segmentations while bounding the number of branch overloads under worst-case attacks. 
The resulting IP-MILP bi-level problem can be solved with an exact column and constraint generation algorithm and with heuristics for fast computation on large-scale instances. 
For the near-real-world Germany case, the applicability of the heuristics is demonstrated and validated under relevant load and dispatch scenarios.
It is found that the simple scheme of segmenting CSOs evenly by their installed capacity leads to only 23\% more segments compared to the heuristic  optimization result, 
suggesting potential relevance as a regulatory measure.

\end{abstract}


\begin{IEEEkeywords}
Cybersecurity, cyber-physcial attacks, electric vehicle charging stations, cyber segmentation
\end{IEEEkeywords}
\thanksto{\noindent This work was sponsored by the German Federal Ministry of Research, Technology and Space in the project CyberStress, funding no. 13N16626.}

\section{Introduction}\label{sec:intro}

The increasing number of electric vehicles (EVs) is accompanied by a rising number of electric vehicle charging stations (EVCSs).
Inevitably, 
the cyber infrastructure that charging station operators (CSOs) use to communicate with and control their EVCSs 
may have various cyber vulnerabilities that can be utilized in a cyber attack to gain control of the EVCSs operation \cite{kaur2025cybersecurity}.
Thereby, an adversary could target the power grid via load-altering attacks (LAA), potentially jeopardizing the safe operation of the power grid \cite{Acharya_Access_2020}.
System operators and regulators, therefore, need to quantify this threat and devise strategies to mitigate the impact of security breaches at CSOs. 

\subsection{Related literature}
LAAs on the power system can be established with high-wattage IoT devices and can target the power grid's safe operation by different means. 
Adversarial load changes and corresponding generation adjustments can lead to line overloads \cite{soltan2018blackiot}, e.g., with EVCSs \cite{khan2019impact}. 
Alternatively, EV-based LAAs may lead to frequency-band violations and generator tripping,
as has been shown for New York City utilizing publicly available data
\cite{Acharya_2020,Acharya_2024}. 
In a different threat instance, the repeated load-switching of hacked EVCSs can excite inter-area oscillation modes in the power grid and thereby lead to subsynchronous stability challenges \cite{wei2023cyber,Abazari_2025}.
While \cite{Acharya_2020,Acharya_2024} demonstrate the threat posed by EV-based LAAs in a realistic setting, an impact analysis on nationwide transmission systems remains open.
Even if the frequency containment reserve (FCR) on a nation-scale  exceeds the manipulable load of a LAA and thereby mitigates frequency stability challenges, overloaded lines and transforms, may still pose a significant problem for such systems.



In response to LAAs, three types of counter measures have been discussed:
Reference \cite{Lakshminarayana_21} identifies critical loads for defense through eigenvalue sensitivity analysis, aiming to mitigate frequency instability induced by dynamic LAAs, but  without prescribing specific counter actions.
Corrective physical measures, such as disconnecting buses with compromised EVCSs or deploying operator-controlled storage, are investigated in \cite{wei2023cyber}.
While physical measures can be effective, their deployment can be more expensive than cyber defense ones, limiting their applicability in reality.
Cyber security counter measures for EVCSs' cyber infrastructure are discussed in \cite{Bhattacharya}.
Segmenting the cyber infrastructure of EVCSs is proposed in \cite{kaur2025cybersecurity} as a preventive defense measure, mitigating the risk of an adversary gaining control over multiple EVCS in case of a successful attack. 
In general, network segmentation is a best practice in the cyber security of various critical infrastructure \cite{enisa_25}. 


While different previous works mention the segmentation of the cyber infrastructure as a counter measure, 
most of them, specifically for EVCSs see \cite{kaur2025cybersecurity}, 
remain confined to the cyber layer,
without quantifying the physical impacts of successful attacks or the defensive measures.
One exception is 
\cite{Arguello_2023}, considering a preventive segmentation of the communication network used by transmission system operators for switching operations.
In this work, the communication network is segmented in such a way that the necessary load shed is minimized in case of a successful hack that leads to an opening of relays in substations. 
The simulations in \cite{Arguello_2023} point to the effectiveness of such network segmentation, as the measure decreases the worst-case attack's effect by up to \SI{50}{\percent} in terms of reducing load shedding.
The effect of a preventive segmentation on mitigating the threat of LAAs has not been investigated so far.




\subsection{Scope \& contributions}

In this work, we show that security breaches of CSOs' cyber infrastructure, used to communicate with and control EVCSs, can harm the power grid under realistic conditions. 
To mitigate this threat, we devise a defense strategy based on segmentation of the CSOs' cyber infrastructure, show how to compute it, and validate its efficacy and efficiency on the near-real-world example.
Specifically, our main contributions are:
\begin{enumerate}
    \item We conduct a threat analysis of EV-based LAAs in a near-real-world case of Germany:
    we use a replication of the German transmission system and real public charging infrastructure data, capturing both the spatial and company-based distribution of EVCS. 
    We focus on overloaded lines and transformers, i.e., branch overloads, as the attack target.
    This is because branch overloads can result from LAAs of smaller size than needed to exceed the FCR and thereby obtain frequency violations. These attacks are also more difficult to analyze compared to exceeding FCR-limits.
    \item We propose a bi-level defense design problem for regulator-enforced, cost-aware segmentation of CSOs' cyber infrastructure, devised to limit the impact of worst-case LAAs to an acceptable number of branch overloads. 
    \item We devise an exact column and constraint generation algorithm to solve the resulting IP-MILP bi-level problem and show its applicability to the IEEE RTS 24-Bus system. In addition, we develop several heuristics to handle large-scale instances
    and benchmark them against the exact approach on the IEEE RTS 24-Bus system.
    \item We demonstrate the effectiveness of the heuristics for defending the near-real-world Germany case.
    We find that a small number of segmentations can ensure that the number of overloaded branches remains within acceptable limits under relevant load and dispatch scenarios.
    \item Our key finding is, that the simple scheme of segmenting CSOs evenly by their installed capacity leads to only \SI{23}{\percent} more segments compared to the heuristic optimization result on the near-real-world case of Germany. This may prove the regulatory relevance of the simple scheme, as it shows a potentially acceptable \textit{price of simplicity}.
\end{enumerate}


\smallskip

Section \ref{sec:problem} introduces the threat and defense setting and proposes a mathematical formulation of the defense design problem. 
Section \ref{sec:sol_meth} describes the exact and heuristic methods to solve the resulting optimization problem. 
Numerical case studies on the IEEE RTS 24-Bus system and the near-real-world case for Germany are provided in section \ref{sec:case}. 
Conclusions are drawn in section \ref{sec:conc}. 
\section{Defense design problem}\label{sec:problem}

\subsection{Problem description}

This work aims to defend the power grid against an adversary intruding the cyber infrastructure of CSOs so as to launch LAAs to overload transmission lines and transformers, i.e., transmission grid branches.
Once the cyber infrastructure is intruded, 
the adversary may terminate the charging of EVs or may send signals to plugged-in EVs to either start charging or even inject power back into the grid. 
As a defense, the regulator may preventively prescribe the segmentation of the cyber infrastructure and the attached EVCSs, to reduce the risk of an adversary's lateral movement following a successful intrusion. This segmentation can mitigate the threat of branch overloads successfully, as shown for an example in Fig. \ref{fig:cso_prob}.


\begin{figure}[t]
  \centering
  \includegraphics[width = 1\linewidth]{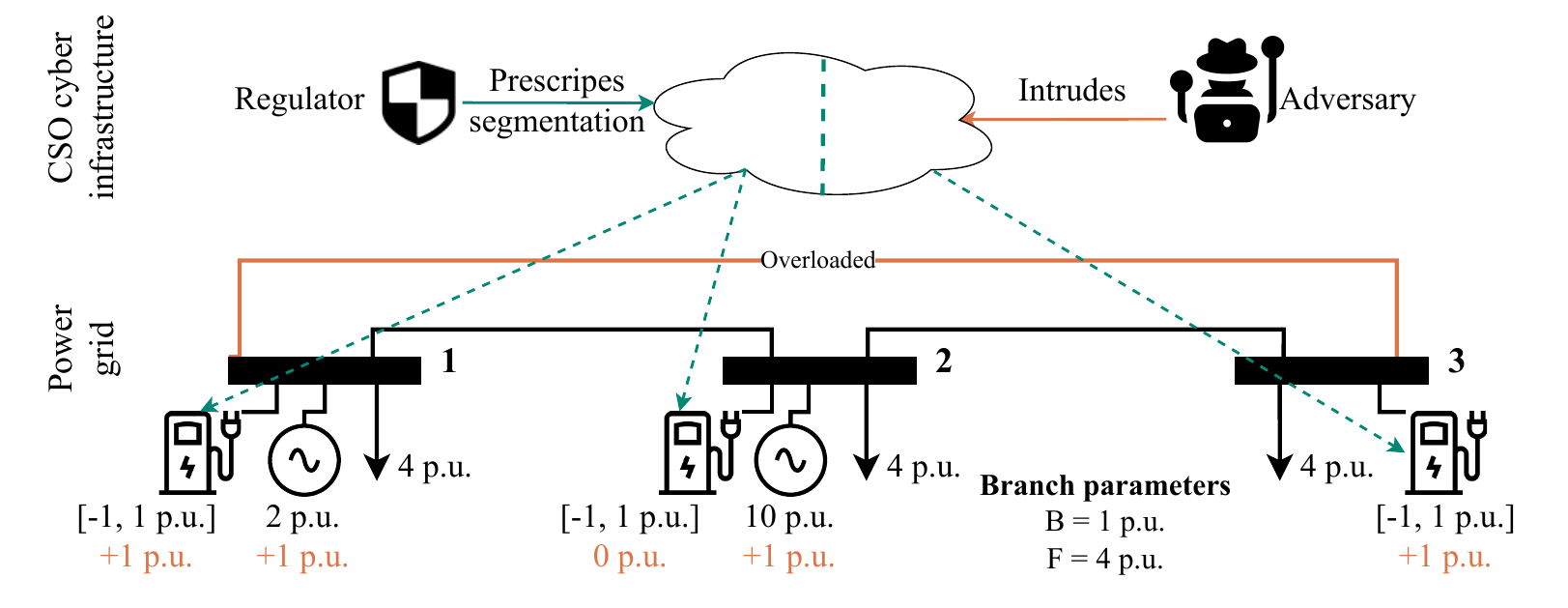}
  \caption{
  By hacking a CSO's cyber infrastructure, an attacker can change default power system operation (black numbers) by a certain level (orange numbers). 
  EVCSs can be manipulated up- and downwards by starting additional charging processes or terminating current ones.
  Attached EVCSs are addressed directly by the adversary, power plants taking part in FCR react indirectly.  
  A branch overload (orange line) follows. 
  A defense against this attack is the segmentation and shown assignment of the CSO's cyber infrastructure (dashed, green lines): in this case, an attack using any one of the segments cannot achieve an overload anymore. The defense is done preventively, i.e., before any attack arises, to prevent a lateral movement of an adversary after an intrusion.
  }
  \label{fig:cso_prob}
\end{figure}




In the following, we formulate a defense design problem to retrieve an effective and efficient segmentation.
The problem minimizes the number of necessary segments while ensuring a predefined acceptable number of overloaded lines and transformers, i.e., branch overloads, in case of a worst-case attack. 
We emphasize that a defense strategy 
applied in practice needs to effectively mitigate the threat in a cost-optimal manner as it increases the system costs. 
In practice, the segmentation of the cyber infrastructure can be implemented by different means, e.g., usage of virtual private networks, splitting of the operator personnel, introducing air gaps in the cyber infrastructure, or requiring independent hard- and software developments. Further, the segmentation should pose no intersection of vulnerabilities across the segments to be effective.
Depending on the required security level, different means may be appropriate, resulting in highly differing costs of additional segments.

\subsection{Mathematical formulation}

We model the defense design problem as a bi-level optimization problem, where 
the upper-level represents the segmentation decisions
and the lower-level the actions of the omniscient adversary aiming at a worst-case LAA given the segmentation.

Let the power grid be described by buses (nodes) $\mathcal{N}$ and branches  $\mathcal{L}$. 
Given the set of operators $\mathcal{O}$, 
$\mathcal{N}_o \subseteq \mathcal{N}$ denotes the set of buses where operator $o \in \mathcal{O}$ controls EVCSs.
Moreover, for operator $o \in \mathcal{O}$, we denote by $\mathcal{S}_o$ the set of possible cyber segments, 
by $b_{o,s} \in \{0,1\}$ the actual use of segment $s \in \mathcal{S}_o$,
and by $a_{o,n,s} \in \{0,1/\text{D},\ldots,1\}$ the fraction of EVCS capacity assigned to segment $s \in \mathcal{S}_o$ at bus $n \in \mathcal{N}_o$.
Note that we use a discretization of the assignment fractions into $\text{D}+1$ values 
to ensure a finite number of overall assignments.
An operator-specific discretization can be modeled analogously. 
Bold variables will denote the set of the corresponding indexed variables throughout.
Further, $u_l^{P,*}\in \{0,1\}$ and $u_l^{N,*}\in \{0,1\}$ indicate the overloading of branch $l \in \mathcal{L}$ by an adversary, whose set of overload maximizing attack decisions, that depend on the  segmentation $\mathbf{a}$, are denoted by  $\mathcal{U}(\mathbf{a})$.

The defense design problem then follows as the bi-level optimization problem
\begin{subequations}\label{eq:seg_problem}
\allowdisplaybreaks
    \begin{align}
    \min_{\mathbf{a},\mathbf{b},\mathbf{u}^{P,*}, \mathbf{u}^{L,*}}  & \; \sum_{o \in \mathcal{O}} \sum_{s \in \mathcal{S}_o} b_{o,s}, \label{eq:seg_prob_obj} \\
    & \text{s.t.} \notag \\
    & a_{o,n,s} \le b_{o,s}, \forall o \in \mathcal{O}, \forall n \in \mathcal{N}_o, \forall s \in \mathcal{S}_o,\label{eq:seg_prob_assign_segment} \\
    & \sum_{s \in \mathcal{S}_o} a_{o,n,s} = 1, \forall o \in \mathcal{O}, \forall n \in \mathcal{N}_o, \label{eq:seg_prob_must_assigned} \\
    & (\mathbf{u}^{P,*}, \mathbf{u}^{L,*}) \in \mathcal{U}(\mathbf{a}),
    \label{eq:seg_prob_sol_adv}   \\
    & \sum_{l \in \mathcal{L}} (u^{P,*}_l + u^{L,*}_l) \le \text{K}. 
    \label{eq:seg_prob_sec_con} 
    \end{align}
\end{subequations}    
The segmentation, potentially done by a regulator, should be cost-efficient, hence, the objective \eqref{eq:seg_prob_obj} minimizes the sum of used segments for all operators. 
Constraint \eqref{eq:seg_prob_assign_segment} ensures that a charging station can only be assigned to used segments, and \eqref{eq:seg_prob_must_assigned} enforces for each operator $o \in \mathcal{O}$ the assignment of the full EVCS capacity at each bus $n \in \mathcal{N}_o$ to the available segments in $\mathcal{S}_o$. 
Constraint \eqref{eq:seg_prob_sol_adv} requires the overload indicators to be optimal solutions of the adversary's lower-level problem parametrized in the assignment $\mathbf{a}$ of EVCSs to segments.
Constraint \eqref{eq:seg_prob_sec_con} expresses that the defense design problem aims to choose segmentation $\mathbf{a}$ such that the worst-case number of overloads is smaller than a given limit $\text{K}$.
If the system is operated in N-1 mode and, hence, one branch can be overloaded anytime without detrimental system effects, then $\text{K}$ would be chosen as one.

The lower-level problem of the adversary is aiming to maximize the number of overloads by hacking certain segments and is defined next.
For operator $o \in \mathcal{O}$, let $h_{o,s} \in \{0,1\}$ indicate the hacking of segment $s \in \mathcal{S}_o$ and $l^{pos}_{o,n} \in \mathbb{R}^{\ge0}$ and $l^{neg}_{o,n}\in \mathbb{R}^{\ge0}$ the change of the charging set-points of the EVCSs at bus $n \in \mathcal{N}_o$.
This changes the power flow  $f_{l} \in \mathbb{R}, \forall l \in \mathcal{L}$, described by the following variables and parameters:
$\theta_n \in \mathbb{R}$ denotes the voltage phase angle in a DC power flow model at bus $n\in\mathcal{N}$, $\mathcal{G}_n$ the set of generators at that bus, and 
$f^D \in \mathbb{R}$ the global frequency deviation. 
Jointly, we denote the set of all lower-level decision variables as $\mathbf{c} := (\mathbf{u}^P, \mathbf{u}^N, \mathbf{h}, \mathbf{l}^{pos}, \mathbf{l}^{neg}, 
\mathbf{f},f^D, \boldsymbol{\theta})$.

The worst-case overloads and the corresponding hacking decisions are then described by the following adversary problem, given the assignment decisions $\mathbf{a}$ of EVCS to segments,
\begin{subequations}\label{eq:adv_problem}
\allowdisplaybreaks
    \begin{align}
    \max_{\mathbf{c}} & \sum_{l \in \mathcal{L}} (u^P_{l}+u^N_{l}), \label{eq:adv_problem_obj} \\
    & \text{s.t.} \notag \\
    & \sum_{o \in \mathcal{O}}\sum_{s \in \mathcal{S}_o} h_{o,s} \le \text{C}^\text{Hack}, \label{eq:adv_prob_budget} \\
    & \text{ \eqref{eq:adv_prob_laa_pos} and \eqref{eq:adv_prob_laa_neg}} \, \forall o \in \mathcal{O}, \forall n \in \mathcal{N}_o: \notag \\
    & l^{pos}_{o,n} \le \text{L}_{o,n}(1-\text{C})\text{C}^\text{ACT} \sum_{s \in \mathcal{S}_o} h_{o,s}\, a_{o,n,s}, \label{eq:adv_prob_laa_pos}\\
    & l^{neg}_{o,n} \le \text{L}_{o,n} \text{C}(1+\text{C}^\text{V2G}) \sum_{s \in \mathcal{S}_o} h_{o,s}\, a_{o,n,s}, \label{eq:adv_prob_laa_neg} \\
    & -\text{LAA}^{\text{max}} \le \sum_{o \in \mathcal{O}}\sum_{n \in \mathcal{N}_o} (l^{pos}_{o,n} - l^{neg}_{o,n}) \le \text{LAA}^{\text{max}}, \label{eq:adv_prob_laa_sum} \\
    & \sum_{g \in \mathcal{G}_n} (\text{P}^{\text{G}}_g - \text{K}^{\text{G}}_g \, f^D
    ) - \sum_{o \in \mathcal{O}} (\text{C}\,\text{L}_{o,n}+l^{pos}_{o,n}-l^{neg}_{o,n})  \notag \\ 
    & \quad -\text{P}^{\text{D}}_n = \sum_{l \in \mathcal{L}} \text{I}_{n,l} f_l, \forall{n} \in \mathcal{N}, \label{eq:adv_prob__balance} \\
    & \sum_{n \in \mathcal{N}} \text{S}_{l,n}\theta_n = f_l, \forall l \in \mathcal{L}, \label{eq:adv_prob_flows} \\
    & f^D = -\frac{\sum_{o \in \mathcal{O}}\sum_{n \in \mathcal{N}_o} (l^{pos}_{o,n}-l^{neg}_{o,n})}{\sum_{n \in \mathcal{N}} \sum_{g \in \mathcal{G}_n} \text{K}^{\text{G}}_g}, \label{eq:adv_prob_f_dev} \\
   & f_{l} - \text{F}_{l} \leq u^P_{l} \text{M}, \forall l \in \mathcal{L},\label{eq:adv_prob_ov_either}\\
   & f_{l} - \text{F}_{l} \geq ( u^P_{l} - 1) \text{M}, \forall l \in \mathcal{L}, \\
   & -f_{l} - \text{F}_{l} \leq  u^N_{l} \text{M}, \forall l \in \mathcal{L},\\
   & -f_{l} - \text{F}_{l} \geq (u^N_{l} - 1) \text{M}, \forall l \in \mathcal{L}. \label{eq:adv_prob_ov_ind} 
    \end{align}
\end{subequations} 

The objective \eqref{eq:adv_problem_obj} of the adversary is to maximize the number of branch overloads.
The adversary's hacking decision is limited by a budget of at most $\text{C}^\text{Hack}$ compromised  segments in \eqref{eq:adv_prob_budget}. 
Constraints \eqref{eq:adv_prob_laa_pos} and \eqref{eq:adv_prob_laa_neg}, respectively, 
limit the positive and negative LAA power to plausible values.
Here, 
$\text{L}_{o,n}$ denotes the installed capacity of EVCSs of operator $o \in \mathcal{O}$ at bus $n \in \mathcal{N}_0$,
$\text{C}$ is the fraction of EVCS capacity that typically charges at any point in time (the charging coincidence),
$\text{C}^\text{ACT} \in [0,1]$ is the fraction of typically non-charging EVCS capacity that could be activated by a LAA,
and $\text{C}^\text{V2G} \in [0,1]$ is the fraction of typically charging capacity that could potentially be ordered to inject power from the vehicle back into the grid.
Constraint \eqref{eq:adv_prob_laa_sum} bounds the aggregated net load change due to the LAA .
Since we focus on branch overloads here, not frequency band violations, the limit $\text{LAA}^{\text{max}}$ would typically be the available FCR power.
The nodal power balance \eqref{eq:adv_prob__balance} for node $n\in\mathcal{N}$ accounts 
for the generation $\text{P}^{\text{G}}_g$ scheduled without an attack and the FCR response 
and the FCR response with gain $K^\text{P}_g$ for each generator $g \in \mathcal{G}_n$, 
the EVCS loads and a base load $\text{P}^{\text{D}}_n$, 
and the grid power injection that is linked to branch power flows via the bus-branch incidence matrix $\text{I}_{n,l}$, $n\in\mathcal{N}$, $l\in\mathcal{L}$.
Constraints \eqref{eq:adv_prob_flows} describes the DC power flow in terms of the branch-bus susceptance matrix $\text{S}_{l,n}$, $l\in\mathcal{L}$, $n\in\mathcal{N}$.
Constraints \eqref{eq:adv_prob_f_dev} models the global frequency deviation depending on the overall power balance.
It is assumed that the power balance was satisfied prior to the LAA.
The branch overload indicators are modeled in constraints \eqref{eq:adv_prob_ov_either}-\eqref{eq:adv_prob_ov_ind}
using a large big-M constant $\text{M}$. 
If a power flow $f_l$ is larger or equal to a predefined threshold $\text{F}_{l}$, $l\in\mathcal{L}$,
the branch is assumed to be overloaded. The threshold may plausibly represent the temporary attainable loading of lines and transformers, necessitating corrective actions within short time, if approached, to maintain safe operation.
As power flows are directed, we model positive and negative overloads separately.



\section{Solution methodology}\label{sec:sol_meth}
The defense design problem is a bi-level optimization problem where all upper level variables are integer, the lower-level variables are both integer and continuous, and all constraints are linear in the variables of the corresponding level.
The IP-MILP bi-level problem can, in principle, be solved by enumerating all possible upper level decisions and solving the corresponding lower level problem.
Yet, more efficient approaches are needed.
To this end, we first devise a column and constraint generating (CCG) algorithm.
Additionally, we develop four segmentation heuristics for large-scale instances. 

Before describing these algorithms, 
we define an additional constraint for problem \eqref{eq:seg_problem} that does not change the value of the optimal solution, but "cuts" a large number of equivalent, permuted assignment solutions:
Assuming that the sets $\mathcal{S}_o$ are ordered and indexed by natural numbers for $o \in \mathcal{O}$, let
\begin{align}
    & b_{o,s} \le b_{o,s-1}, \forall o \in \mathcal{O}, \forall s \in \mathcal{S}_o: s > 1.\label{eq:mp_permut}
\end{align}
A segment can then only be used if its predecessor is already used.
If, e.g., we want to use 2 segments out of 10, the number of equivalent assignments is reduced from $\binom{10}{2}$ to one. 

\subsection{Column and constraint generation}

The CCG algorithm consists of an iteratively growing master-problem $\mathcal{MP}_i$ and a sub-problem $\mathcal{SP}(\mathbf{a_i})$ to determine the optimal attack to consider additionally.

Specifically, let the master problem $\mathcal{MP}_i$ in iteration $i > 1$ with $\eta \in \mathbb{R}$ be given as
\begin{subequations}\label{eq:master_problem}
\allowdisplaybreaks
    \begin{align}
    \min_{\mathbf{a},\mathbf{b},\eta, \mathbf{c}_{\{1, \dots, i\}}}  &\eqref{eq:seg_prob_obj} \label{eq:mp_obj} \\
    \text{s.t. } & \eqref{eq:seg_prob_assign_segment}, \eqref{eq:seg_prob_must_assigned},  \eqref{eq:mp_permut}, \label{eq:mp_seg_cons} \\
        & 0 \le \eta \le \text{K}, \label{eq:mp_ovs} \\
    &\mathcal{C}(\mathbf{a},\mathbf{c}_j,\eta;\boldsymbol{\alpha}_{j-1}),    \forall j \in\{1, \dots, i\} \label{eq:mp_con_set}.
    \end{align}
\end{subequations}
Here, $\mathcal{C}(\mathbf{a},\mathbf{c},\eta;\boldsymbol{\alpha})$ denotes a set of constraints for 
segmentation variables $\mathbf{a}$, threshold variable $\eta$, and lower-level decision variables $\mathbf{c}$
and for fixed attack values 
$\boldsymbol{\alpha} =(\mathbf{h}^f,\mathbf{l}^{pos,f},\mathbf{l}^{neg,f})$ that are determined in previous iterations of the CCG algorithm.
Superscript $f$ denotes fixed variable values throughout. 
The constraints are 
\begin{subequations}\label{eq:ccg_set_i}
\allowdisplaybreaks
    \begin{align}
    \mathcal{C}&(\mathbf{a},\mathbf{c},\eta; \boldsymbol{\alpha}) := \{ \notag\\
    & l^{pos}_{o,n} = \text{l}^{pos,f}_{o,n} \sum_{s \in \mathcal{S}_o} \text{h}^f_{o,s}\, a_{o,n,s}, \, \forall o \in \mathcal{O}, \forall n \in \mathcal{N}_o \label{eq:ccg_laa_pos}\\
    &  l^{neg}_{o,n} =  \text{l}^{neg,f}_{o,n} \sum_{s \in \mathcal{S}_o}\text{h}^f_{o,s}\, a_{o,n,s}, \, \forall o \in \mathcal{O}, \forall n \in \mathcal{N}_o \label{eq:ccg_laa_neg} \\
    & \eqref{eq:adv_prob__balance}-\eqref{eq:adv_prob_ov_ind},\; \label{eq:ccg_cons_borrowed} \\
   & \sum_{l \in \mathcal{L}} (u^P_{l} + u^N_{l}) \le \eta \label{eq:ccg_ovl}\}.
    \end{align}
\end{subequations} 
Problem~\eqref{eq:master_problem} generates new columns (variables) $\mathbf{c}_j$ in each iteration that describe the power flow and the overloads,  corresponding the currently, considered segmentation $\mathbf{a}$ and the fixed attack vector $\boldsymbol{\alpha}_{j}$.
The $\mathbf{c}_j$ values are determined by constraint set \eqref{eq:ccg_set_i} instead of the lower level optimization problem \eqref{eq:adv_problem}, as implied by \eqref{eq:seg_prob_sol_adv}.
This renders \eqref{eq:master_problem} not a bi-level but a single-level MILP problem that can be solved with standard solvers.

To understand the rationale behind conditions \eqref{eq:ccg_set_i} suppose that $\boldsymbol{\alpha}$ is determined as a response for assignment $\mathbf{a}^\dagger$, as described below.
If then the argument $\mathbf{a}$ for the conditions \eqref{eq:ccg_set_i} is chosen as $\mathbf{a}^\dagger$, then the number of overloads in any valid vector $\mathbf{c}$ is the same as the one retrieved by solving the lower-level adversary problem \eqref{eq:adv_problem} with the segmentation $\mathbf{a}^\dagger$,
since in this case $l^{pos}_{o,n} = \text{l}^{pos,f}_{o,n}$ and $l^{neg}_{o,n} = \text{l}^{neg,f}_{o,n}$ and the branch overloads are uniquely determined by this.
If $\mathbf{a}$ is not chosen as equal to $\mathbf{a}^\dagger$, then constraints \eqref{eq:ccg_laa_pos} and \eqref{eq:ccg_laa_neg} 
ensure that the attack vector, that is part of any $\mathbf{c}$, is consistent with the considered assignment $\mathbf{a}$.
Moreover, the number of branch overloads, computed by the components of any such $\mathbf{c}$, is always a lower bound for the number of branch overloads retrieved by solving the lower-level adversary problem \eqref{eq:adv_problem} with the same segmentation $\mathbf{a}$.
This is because the lower-level problem \eqref{eq:adv_problem} can deliberately choose the optimal attack, while in conditions \eqref{eq:ccg_set_i} the attack is fixed via the argument $\boldsymbol{\alpha}$. The other conditions are identical and uniquely determine the overloads.

In each iteration $i$ the master problem grows by additionally considering the worst-case attack $\boldsymbol{\alpha}_i$ computed for the previous optimal assignment vector $\mathbf{a}_{i-1}$.
The worst-case attack is determined through the sub-problem $\mathcal{SP}(\mathbf{a})$ given as 
\begin{subequations}\label{eq:sub_problem}
\allowdisplaybreaks
    \begin{align}
\max_{\mathbf{c}} \; &\eqref{eq:adv_problem_obj} \label{eq:sp_obj} \\
    \text{s.t. } &
    \eqref{eq:adv_prob_budget}-\eqref{eq:adv_prob_f_dev}, \label{eq:sp_adv_cons} \\
   & f_{l} - \text{F}_{l}(1+\epsilon) \leq u^P_{l} \text{M}, \forall l \in \mathcal{L}, \label{eq:sp_ov_ind1} \\
   & f_{l} - \text{F}_{l}(1+\epsilon) \geq ( u^P_{l} - 1) \text{M}, \forall l \in \mathcal{L}, \\
   & -f_{l} - \text{F}_{l}(1+\epsilon) \leq  u^N_{l} \text{M}, \forall l \in \mathcal{L},\\
   & -f_{l} - \text{F}_{l}(1+\epsilon) \geq (u^N_{l} - 1) \text{M}, \forall l \in \mathcal{L} .\label{eq:sp_ov_ind4} 
    \end{align}
\end{subequations} 
Sub-problem \eqref{eq:sub_problem} is identical to the lower level \eqref{eq:adv_problem} except for the conditions \eqref{eq:sp_ov_ind1}-\eqref{eq:sp_ov_ind4} that replace conditions \eqref{eq:adv_prob_ov_either}-\eqref{eq:adv_prob_ov_ind} and increase the overload-branch-ratings $\text{F}_{l}$ by a small constant $\epsilon > 0$.
This is done because conditions  \eqref{eq:adv_prob_ov_either}-\eqref{eq:adv_prob_ov_ind}
do not uniquely determine the overload status of branch $l \in \mathcal{L}$ for $f_l = \text{F}_{l}$.
The sub-problem aiming at maximization of overloads would count the branch as overloaded whereas the master problem, containing \eqref{eq:adv_prob_ov_either}-\eqref{eq:adv_prob_ov_ind} via \eqref{eq:ccg_cons_borrowed} and aiming to minimize the overloads, would not.
The slight increase of the overload-branch-ratings in the subproblem solves this inconsistency.

The full CCG algorithm including initialization and convergence checks is given in Algorithm~\ref{alg:ccg}.
Let $\mathbf{a}^{max}$ denote the maximal segmentation of the EVCSs, i.e., every available segment of an operator is used and accommodates only the smallest possible amount of EVCS capacity for each bus.
The number of worst-case branch overloads with this maximal segmentation is a lower bound $\mu^{lo}$ on the number of overloads for any other segmentation with fewer and, thus, larger segments.
If the number exceeds the allowed limit $\text{K}$, no acceptable defense exists and the algorithm terminates. 
A natural upper bound $\mu^{lo}$ on the number of branch overloads is the total number of branches.
While the upper bounds is larger than the lower bound, we then alternate between computing a new worst-case attack for the current segmentation and updating the segmentation as to defend against the newly discovered attack as well as all previous ones.
As initial segmentation $\mathbf{a}_0$ we use
the minimal segmentation where all EVCSs of an operator are assigned to a single segment, i.e., no preventive segmentation is enforced. 
For this minimal segmentation, subproblem $\mathcal{SP}$ returns the largest amount of overloaded branches possible. 

\begin{algorithm}\label{alg:ccg}
\caption{Column and constraint generation}
Initialize iteration counter $i \gets 0$\;
Solve $\mathcal{SP}(\mathbf{a}^{max})$ and initialize lower bound $\mu^{lo} \gets \sum_{l\in \mathcal{L}} u^P_{l} + u^N_{l}$\;
\If{$\mu^{lo} > \text{K}$} {
\Return \textit{Problem is infeasible.}}
Initialize upper bound $\mu^{up} = |\mathcal{L}|$\;
\While{$\mu^{up} > \mu^{lo}$}{
    $i \gets i + 1$ \;
    Solve $\mathcal{SP}(\mathbf{a}_{i-1})$ to 
    obtain attack $\boldsymbol{\alpha}_i$
    and update upper bound $\mu^{up} \gets \sum_{l\in \mathcal{L}} u^P_{l} + u^N_{l}$\;
    Solve growing $\mathcal{MP}_i$ to obtain segmentation $\mathbf{a}_i$ and update lower bound 
    $\mu^{lo} \gets \eta$\;
}
\Return final segmentation $\mathbf{a}_i$
\end{algorithm}

\begin{proposition}\label{prop:conv_ccg}
The proposed CCG algorithm terminates in a finite number of iterations with the exact optimal solution of the original defense design problem \eqref{eq:seg_problem}.
\end{proposition}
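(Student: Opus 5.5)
We prove finiteness and exactness separately, using the structure of Algorithm~\ref{alg:ccg} and the finiteness of the relevant decision sets. Because of the discretization $a_{o,n,s}\in\{0,1/\text{D},\ldots,1\}$ and the binary $b_{o,s}$, the feasible upper-level set $\mathcal{A}$ defined by \eqref{eq:seg_prob_assign_segment}, \eqref{eq:seg_prob_must_assigned}, \eqref{eq:mp_permut} is finite. The attack vectors $\boldsymbol{\alpha}$ produced by $\mathcal{SP}$ need not come from a finite set, since $\mathbf{l}^{pos},\mathbf{l}^{neg}$ are continuous. So finiteness cannot rest on the attacks alone. It will instead come from the claim that no segmentation is returned by the master problem twice unless the algorithm stops.

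\textbf{Step 1 (relaxation property).} For every $\mathbf{a}\in\mathcal{A}$ and every fixed attack $\boldsymbol{\alpha}_j$, the overload count implied by $\mathcal{C}(\mathbf{a},\mathbf{c}_j,\eta;\boldsymbol{\alpha}_{j-1})$ is at most the lower-level value $\phi(\mathbf{a})$ of \eqref{eq:adv_problem}. This is the argument sketched before the algorithm:
\begin{itemize}
\item the scaled attack in \eqref{eq:ccg_laa_pos}--\eqref{eq:ccg_laa_neg} is feasible for \eqref{eq:adv_problem} under $\mathbf{a}$;
\item the flows are then uniquely determined;
\item the minimizing master problem counts a branch only when the flow strictly exceeds, or meets, $\text{F}_l$, which never counts more than the adversary could.
\end{itemize}
Hence every $\mathbf{a}$ that is feasible for \eqref{eq:seg_problem}, meaning $\phi(\mathbf{a})\le\text{K}$, is feasible in every $\mathcal{MP}_i$. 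The optimal value of $\mathcal{MP}_i$ is therefore a lower bound on the optimum of \eqref{eq:seg_problem}, and it is nondecreasing in $i$. I should also check that the feasibility pretest with $\mathbf{a}^{max}$ is valid, i.e.\ that the adversary can reproduce any coarser attack with the finest segments. With fractions $1/\text{D}$ this is not literally obvious, so I would state it as an assumption or restrict it to the budget interpretation used in the paper.

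\textbf{Step 2 (no repetition).} Suppose $\mathcal{SP}(\mathbf{a}_{i-1})$ returns an attack with more than $\text{K}$ overloads. Because of the $\epsilon$-tightened ratings in \eqref{eq:sp_ov_ind1}--\eqref{eq:sp_ov_ind4}, each counted branch has flow at least $\text{F}_l(1+\epsilon)>\text{F}_l$. When this attack is re-inserted as $\boldsymbol{\alpha}_i$ with $\mathbf{a}=\mathbf{a}_{i-1}$, constraints \eqref{eq:adv_prob_ov_either}--\eqref{eq:adv_prob_ov_ind} therefore force those $u$'s to one. Then \eqref{eq:ccg_ovl} and \eqref{eq:mp_ovs} make $\mathbf{a}_{i-1}$ infeasible in $\mathcal{MP}_i$, so every iteration cuts off at least one new element of the finite set $\mathcal{A}$. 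This is the main technical obstacle: the role of $\epsilon$ is precisely to eliminate the tie $f_l=\text{F}_l$, and one must argue that an overload found by $\mathcal{SP}$ is also an overload in the master problem. I would also argue that an attack is never simultaneously a tie in $\mathcal{SP}$ and a strict overload in the master problem. This costs an exactness caveat at the $\epsilon$ level, which I would handle by assuming $\epsilon$ is small enough that the finitely many relevant values $\phi(\mathbf{a})$ are unchanged.

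\textbf{Step 3 (termination and optimality).} Since $\mathcal{A}$ is finite, after at most $|\mathcal{A}|$ iterations one of two things happens. Either $\mathcal{SP}(\mathbf{a}_{i-1})$ certifies $\phi(\mathbf{a}_{i-1})\le\text{K}$, or $\mathcal{MP}_i$ becomes infeasible, which is excluded by the pretest together with Step 1. In the first case, $\mathbf{a}_{i-1}$ is feasible for \eqref{eq:seg_problem}, and it minimizes a relaxation by Step 1, so it is optimal.

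Finally, I would reconcile this with the bound-based stopping rule $\mu^{up}\le\mu^{lo}$ as written in Algorithm~\ref{alg:ccg}. The rule should be interpreted as "the attack found for the incumbent does not exceed the master's $\eta\le\text{K}$". I would phrase the proof so that, at stopping, the segmentation is both feasible and optimal for the master relaxation, and I would adjust the reading of the bounds if necessary.
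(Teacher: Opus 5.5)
Your overall route matches the paper's: finitely many discretized segmentations, an incumbent that cannot recur without termination, the master as a relaxation, and $\mathcal{SP}$ as the feasibility certificate. Your handling of $\epsilon$, of the $\mathbf{a}^{max}$ pretest and of returning the certified $\mathbf{a}_{i-1}$ is more careful than the paper's.

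\textbf{Gap in Step~1.} The first bullet of Step~1 does not hold as stated. Constraints \eqref{eq:ccg_laa_pos}--\eqref{eq:ccg_laa_neg} rescale the stored attack bus by bus by the hacked share $\sum_{s}\text{h}^f_{o,s}a_{o,n,s}$, and this share varies with $\mathbf{a}$. The rescaling respects the budget \eqref{eq:adv_prob_budget} and the per-bus caps \eqref{eq:adv_prob_laa_pos}--\eqref{eq:adv_prob_laa_neg}. It does not respect the aggregate bound \eqref{eq:adv_prob_laa_sum}, and $\mathcal{C}$ does not contain that bound, since it borrows only \eqref{eq:adv_prob__balance}--\eqref{eq:adv_prob_ov_ind}.

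For instance, take an attack found for $\mathbf{a}_0$ that raises load at operator A's buses and lowers it by the same amount at operator B's buses. Once A is split into two uniform halves (the structure of the optimal 24-bus segmentation), A's half of the change disappears. The reinserted attack becomes a net load reduction, which violates $\text{LAA}^{\text{max}}=0$. The master then counts the overloads of an attack the adversary cannot launch under $\mathbf{a}$, including the FCR response via \eqref{eq:adv_prob_f_dev}. That count can exceed $\phi(\mathbf{a})$. So $\mathcal{MP}_i$ may cut off a segmentation with $\phi(\mathbf{a})\le\text{K}$, possibly the optimal one or $\mathbf{a}^{max}$.

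Both your optimality conclusion and your Step~3 claim that $\mathcal{MP}_i$ never becomes infeasible rest on this bullet. The paper does not supply it either: its proof uses the same lower-bound assertion made in the discussion preceding the algorithm. You need an extra hypothesis, e.g.\ that \eqref{eq:adv_prob_laa_sum} is never active for reinserted attacks, or a differently constructed cut.

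\textbf{Unstated convention in Step~2.} With $\mathbf{a}=\mathbf{a}_{i-1}$, constraint \eqref{eq:ccg_laa_pos} gives $l^{pos}_{o,n}=\text{l}^{pos,f}_{o,n}\,\sigma_{o,n}$, where $\sigma_{o,n}=\sum_s\text{h}^f_{o,s}(\mathbf{a}_{i-1})_{o,n,s}$ is the hacked share. For $\text{D}\ge 2$ this share can be $1/\text{D}$. If $\text{l}^{pos,f}$ and $\text{l}^{neg,f}$ are the raw $\mathcal{SP}$ values, the reinserted attack is scaled down and the overloads found by $\mathcal{SP}$ need not reappear. Then $\mathbf{a}_{i-1}$ is not cut off and the algorithm can cycle on it.

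The paper's remark that $l^{pos}_{o,n}=\text{l}^{pos,f}_{o,n}$ at $\mathbf{a}^\dagger$ presupposes that the stored attack is divided by $\sigma_{o,n}$; you should make this explicit. Under that normalization the rescaling factors at other segmentations can exceed one. The per-bus caps still hold, but this does not help with the aggregate-bound problem above.
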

\begin{proof}  
The number of iterations is finite since the number of possible segmentations is finite and the master problem cannot produce the same segmentation twice without terminating the algorithm;
if it would, the worst-case attack generated by $\mathcal{SP}(\mathbf{a}_{i-1})$ would be identical to an attack already considered in the master problem 
and then $\mu^{lo}$ and $\mu^{up}$ would be identical in the next iteration, terminating the algorithm.
Given the termination of the CCG algorithm with a repeated segmentation vector $\mathbf{a}_{i}$, the number of used segments is minimized by the master-problem and the subproblem guarantees that no other attack would lead to more branch overloads.
\end{proof}

In practice, the number of possible segmentations and corresponding worst-case attacks may be large which means that the CCG algorithm may take long to terminate for large-scale power grids. 
While the CCG algorithm offers at least valid upper and lower bounds on the possible number of branch overloads at any time such that an informed decision about early termination can be done,
we nevertheless devise several faster heuristics in the following.

\subsection{Heuristic approaches}




We propose four heuristic schemes: uniform thresholding, balanced clustering, and two iterative informed schemes.


\subsubsection{Uniform thresholding} 
In this scheme, the number of used segments per operator is determined as the total installed EVCS capacity divided by a predefined maximally allowed capacity per segment.
At each bus, the EVCSs of an operator are uniformly distributed to all available segments.
For the maximally allowed capacity \texttt{CS} per segment, we denote this heuristic as \texttt{uni\_thres\_CS}.


\subsubsection{Balanced Clustering} 
In this scheme, the number of segments per operator is given as a constant \texttt{KS}.
At each bus, all EVCSs of an operator are assigned to exactly one of the available segments.
This is done such that the buses corresponding to one segment are close together, while the total EVCS capacity per segment is also balanced.
This challenge is approached as a capacitated clustering problem that is formulated and solved as an optimization problem for each operator $o \in \mathcal{O}$.
For the binary assignments $a_{o,n,s}$, $s \in \mathcal{S}_o$, the inequality of the segments' sizes $\kappa \in \mathbb{R}$ and a fixed penalty factor $\lambda \in \mathbb{R}^{\ge 0}$,
we solve

\begin{subequations}\label{eq:cap_clus_problem}
\allowdisplaybreaks
    \begin{align}
    \min_{\mathbf{a},\kappa}  & \sum_{s \in \mathcal{S}_o} \sum_{n,k \in \mathcal{N}_o} a_{o,n,s}a_{o,k,s} \text{d}_{n,k} + \lambda \kappa\label{eq:cap_clus_obj} \\
    & \text{s.t.} \sum_{s \in \mathcal{S}_o} a_{o,n,s} = 1,\forall n \in \mathcal{N}_o  \label{eq:cap_assign}\\
    & \sum_{n \in \mathcal{N}_o}  \text{L}_{o,n} a_{o,n,s} - \frac{1}{\texttt{KS}} \sum_{n \in \mathcal{N}_o} \text{L}_{o,n} \le \kappa, \forall s \in \mathcal{S}_o \label{eq:cap_knap}.
    \end{align}
\end{subequations}  
As distance $\text{d}_{n,k}$ between buses $n$ and $k$ we use the electrical distance defined in \cite{Cotilla_2013}. 
The bi-linearity in \eqref{eq:cap_clus_obj} can be linearized exactly for the binary variables  yielding an integer linear problem. 
We denote this heuristic for \texttt{KS} segments  as \texttt{clus\_seg\_KS}.

\subsubsection{Iterative informed}

The iterative schemes start with the minimal segmentation $\mathbf{a}_0$ as the current segmentation $\mathbf{a}$.
Then the sub-problem $\mathcal{SP}(\mathbf{a})$ is evaluated to obtain the maximal number of branch overloads and the worst-case attack for the current segmentation $\mathbf{a}$. 
If the obtained number of overloaded branches is below $\text{K}$, the algorithm terminates.
Otherwise, the hacked segments are split into two segments and the algorithms iterates.
The splitting of the hacked segments can be done either uniformly per bus or via balanced clustering, yielding two heuristics. If the splitting is done uniformly per bus with \texttt{S} splits, the heuristic abbreviates to \texttt{itin\_thres\_S}, and to \texttt{itin\_clus\_KS}, if the balanced clustering approach with \texttt{KS} segments is used. 

\section{Numerical Case Studies}\label{sec:case}

We conduct simulations on the IEEE RTS 24-Bus system and a near-real-world case of the German transmission system. 
For both grids, we set the number $\text{K}$ of allowed overloads in the worst-case to one, representing an operative N-1 security margin. 

We implemented the case studies in \texttt{Julia} using \texttt{JuMP} \cite{jump} and \texttt{PowerModels} \cite{powermodels}. As solver, we use \texttt{Gurobi} 12.0.0 \cite{gurobi} with default settings.  The simulations are run on a laptop with an Intel\textsuperscript{\textregistered} Core\textsuperscript{TM} i7-1260P CPU and \SI{32}{\giga\byte} of RAM. 


\subsection{IEEE RTS 24-Bus system}\label{sec:24_bus}
We first benchmark the heuristics against the exact CCG approach on the IEEE RTS 24-Bus system.\footnote{The data of the IEEE RTS 24-Bus system can be found at  \url{https://github.com/MATPOWER/matpower/blob/master/data/case24_ieee_rts.m}.} 
To this end, we set the test case under stressed operational conditions by reducing the branches' transmission capacity to \SI{65}{\percent}, as in \cite{kuroptev2025stress} and others. Further, we deploy 15 equally large charging stations on the grid that are assigned equally to five CSOs. The discretization parameter $\text{D}=2$ is chosen. The charging load totals to \SI{10}{\percent} of the grid's overall load, and the charging simultaneity is set to \SI{20}{\percent} (C = 0.2), which is accounted for in the generator dispatch that is computed by a DC-OPF economic dispatch. 
The assignment of charging stations to buses is shown in Fig. \ref{fig:ieee_24_bus}, and further parameters are given in Table \ref{tab:simu_params_24bus}.

\begin{table}[tbp]
  \centering
  \caption{Simulation parameters for the IEEE RTS 24-Bus system case.}
  \label{tab:simu_params_24bus}
  \tiny
  \begin{tabular}{@{} llllllll @{}}
    \toprule
    \textbf{$\text{C}^\text{Hack}$} & 
    \textbf{$\text{K}^\text{G}_g$} & 
    \textbf{$\text{C}^\text{V2G}$} & 
    \textbf{$\text{C}^\text{ACT}$} & 
    \textbf{$\text{LAA}^\text{max}$} & 
    \textbf{M} & 
    \boldmath$\epsilon$ \\
    \midrule
    2 & 
    relative to dispatch & 
    0 & 
    1 & 
    0 & 
    100 & 
    $10^{-3}$ \\
    \bottomrule
  \end{tabular}
\end{table}

With a hacking budget $\text{C}^\text{Hack}$ of two and no preventive segmentation, i.e., using only the minimal segmentation $\mathbf{a}_0$,
the worst-case LAA leads to the overloading of two branches, necessitating a defense.

\begin{figure}[t]
  \centering
  \includegraphics[width = 0.8\linewidth]{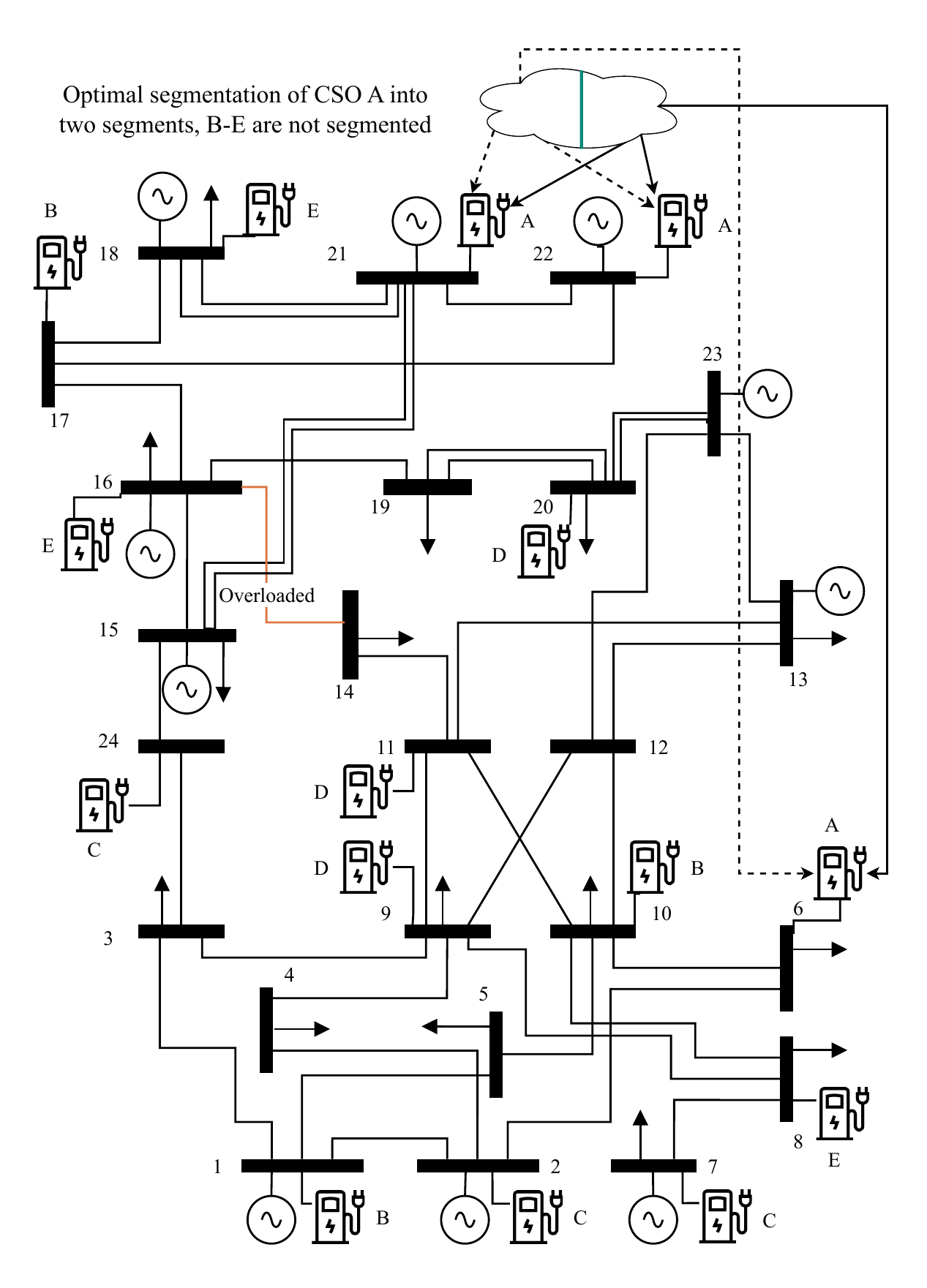}
  \caption{The optimal preventive segmentation of the IEEE RTS 24-Bus system with five CSOs (black letters next to EVCSs)
  splits only the cyber infrastructure of CSO A into two segments, with a uniform distribution at each bus.
  This results in at most one overloaded branch (marked orange) in case of a worst-case LAA attack.
  } 
  \label{fig:ieee_24_bus}
\end{figure}

In Fig. \ref{fig:ieee_24_bus} the optimal segmentation computed with the CCG algorithm is shown. 
It uses 6 segments, i.e., the cyber infrastructure of only one CSO is split.
The observed splitting is identical to a uniform splitting per bus.

Changing the discretization parameter to one resulted in seven required segments, and changing it to three resulted in 6 segments. 
This implies that $\text{D}=2$ is an appropriate discretization choice here and a more detailed discretization does not improve the results.


All heuristics yield a larger number of segments than the exact CCG approach to ensure at most 1 overloaded branch.
Specifically, we obtain 7 segments for  \texttt{itin\_thres\_2} and \texttt{itin\_clus\_2}.
This means, the iterative informed segmentation schemes are \SI{16.7}{\percent} suboptimal, considering the examined setting. The schemes \texttt{uni\_thresh\_0.285}, where a segments capacity is set to 0.285 p.u., i.e., \SI{50}{\percent} of an operator's installed capacity, and \texttt{clus\_seg\_2} require both 10 segments. This renders the simple schemes less efficient than the iterative informed schemes.


\subsection{Near-real-world case: setup}

We built the near-real-world case of Germany using the power grid model ``SciGrid'' from \cite{matke2017structure}
 and the official German public charging infrastructure data from \cite{bnetza_emobilitaet_2025}.\footnote{The ``SciGrid'' data set is retrieved from \texttt{PyPSA} \url{https://doi.org/10.5281/zenodo.14824654}.}

The power grid data from \cite{matke2017structure} contains a reconstruction of the transmissions system topology (\SI{220}{\kilo\volt} and above), branch parameters, loads, and available generators including cost parameters. 
Further, the substation (bus) data include the geographic coordinates and the locally available voltage levels. In total, the ``SciGrid'' consists of 585 buses and 948 branches.
The official German public charging infrastructure data from \cite{bnetza_emobilitaet_2025} contains all public EVCSs with their installed nominal, charging capacities, their geo-coordinates, and the responsible CSOs. 
EVCSs were assigned to the nearest grid bus that has a connection to the distribution grid, indicated by a locally available \SI{110}{\kilo\volt} voltage level.
The resulting grid and the nodal EVCS capacities are shown in Fig. \ref{fig:scigrid_charging_stations}.\footnote{The near-real-world case data sets used in this paper are published at \url{https://github.com/EINS-TUDa/CSO_Segmentation}.}



\begin{figure}[htbp]
  \centering
  \includegraphics[width = 0.8\linewidth]{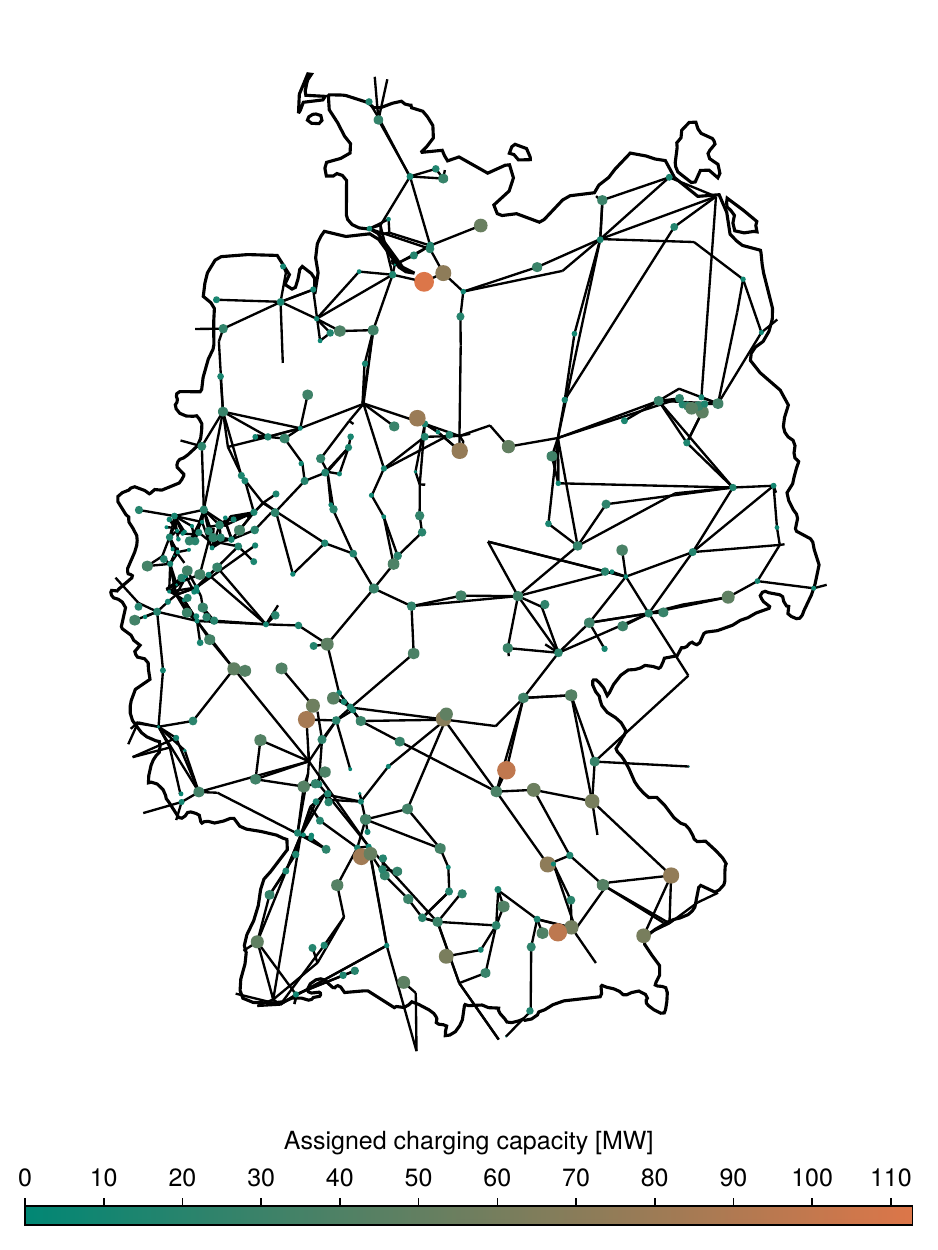}
  \caption{Reconstructed transmission grid of Germany from \cite{matke2017structure} joined with EVCS data of \cite{bnetza_emobilitaet_2025}. 
  The assigned charging capacity to a bus is coded by size and color of the node.} 
  \label{fig:scigrid_charging_stations}
\end{figure}

In our simulations, we analyze four scenarios representing varying load and renewable energy availability conditions throughout the year.  
For the load we define three values, a low load of \SI{42}{\giga\watt} as the base, a medium load as 1.5 times, and the high load as 1.8 times the base load. 
As renewable energy availability, we use \SI{5}{\percent} and \SI{82}{\percent} as the lowest and highest capacity factor for wind and \SI{0}{\percent} and \SI{68}{\percent} for PV.\footnote{The renewable power availabilities are adapted from the peak injections for Germany in 2014 from \url{ https://www.energy-charts.info/charts/power/chart.htm?c=DE&source=public&interval=year&year=2014}.} 
The capacity factor are assumed geographically uniform throughout Germany.
We combine this values into four scenarios: a medium load-high renewable (MLHR), high load-low renewable (HLLR), low load-no PV (LLNP), and low load-low wind  (LLLW). 
The scenarios' resulting load and renewable availability values are given in Table \ref{tab:ta_scenarios}.

The default setpoints $\text{P}^\text{G}$ of the generators are computed by a DC-OPF economic dispatch where the branch limits are set to the permanent attainable loading (PATL). 
For simplicity reasons, we assume that every dispatched generator is contributing to the FCR proportionally to its scheduled value.
As temporary attainable loadings are missing in the data set, we add \SI{5}{\percent} to the PATL values to obtain the values $\text{F}_l$, $l \in \mathcal{L}$, that the adversary must overcome to achieve a overload.

Concerning the modeling of the CSOs, we observe that the size of the CSOs, in terms of installed EVCS capacity, is very unevenly distributed.
The 20 largest of the 11201 CSOs account for \SI{58}{\percent} of the installed EVCS capacity of in total \SI{6.6}{\giga\watt} across Germany.
We therefore explicitly model only the 20 largest CSOs, while summarizing the rest into one non-hackable CSO. 


\begin{table}[tbp]
  \centering
  \caption{Scenarios with their load and renewable power availability and the branch overloads resulting from Worst-Case LAAs.}
  \label{tab:ta_scenarios}
  \begin{tabular}{lcccccc}
    \toprule
     & \textbf{Load} & \textbf{Wind} & \textbf{PV} & \multicolumn{2}{c}{\textbf{Overloaded branches}} \\
    \cmidrule(lr){5-6}
    \textbf{Scenario}  & [GW] & [GW] & [GW] & $\text{C}^\text{Hack}=2$ & $\text{C}^\text{Hack}=10$ \\
    \midrule
    \textbf{MLHR}    & 62 & 33 & 25   & 4 & 10 \\
    \textbf{HLLR}         & 75 &  2 &  0   & 4 & 12 \\
    \textbf{LLNP}           & 42 & 33 &  0   &  4 &  8 \\
    \textbf{LLLW}    & 42 &  2 & 25 & 2 &  6 \\
    \bottomrule
  \end{tabular}
\end{table}

Further parameters for our simulations for the near-real-world case are shown in Table \ref{tab:simu_params}.
The adversary's hacking budget ($\text{C}^\text{Hack}$) of ten may appear high when considering only 20 modeled CSOs. However, the main components of the cyber infrastructure of all CSOs may originate from only a few vendors.
This increases the risk of a security breach affecting multiple CSOs simultaneously. 
The charging coincidence ($\text{C}$), the vehicle to grid availability ($\text{C}^\text{V2G}$), and the charging activation ($\text{C}^\text{ACT}$) factor
are chosen relatively high.
This represents a high utilization of the EVCS-infrastructure and a modern EV-fleet, which can soon be expected due to an ongoing rise of the EVs share.  
The  positive and negative FCR limits are chosen as $\pm \SI{600}{\mega\watt}$, similar to the provisioned FCR for the German load frequency containment block, thereby limiting the magnitude of a considered LAA ($\text{LAA}^\text{max}$).\footnote{The FCR-demand of the German load frequency containment block is \SI{553}{\mega\watt} and published at \url{https://www.entsoe.eu/network_codes/eb/fcr}.}

\begin{table}[hbp]
  \centering
  \caption{Simulation parameters for the near-real-world case.}
  \label{tab:simu_params}
  \tiny
  \begin{tabular}{@{} llllllllll @{}}
    \toprule
    \textbf{$\text{C}^\text{Hack}$} & 
    \textbf{$\text{K}^\text{G}_g$} & 
    \textbf{$\text{F}_l$} & 
    \textbf{C} & 
    \textbf{$\text{C}^\text{V2G}$} & 
    \textbf{$\text{C}^\text{ACT}$} & 
    \textbf{$\text{LAA}^\text{max}$} & 
    \textbf{M} \\
    \midrule
    10 & 
    relative to dispatch & 
    1.05 $\text{PATL}_l$ & 
    0.7 & 
    1 & 
    1 & 
    \SI{600}{\mega\watt} & 
    100 \\
    \bottomrule
  \end{tabular}
\end{table}


\subsection{Near-real-world  case: threat analysis}

We conduct the threat analysis on the near-real-world case for Germany by solving the adversary problem \eqref{eq:adv_problem} without any preventive segmentation of the CSOs' cyber infrastructure, i.e., using the minimal segmentation $\mathbf{a}_0$.

In Table \ref{tab:ta_scenarios} the number of branch overloads for the worst-case LAA, depending on the hacking budget $\text{C}^\text{Hack}$ and the scenario, is shown. 
Several observations can be drawn: 
first, in every scenario and for both hacking budgets the number of overloaded branches exceeds the N-1 security margin, necessitating defensive measures. 
Second, scenario HLLR has more worst-case overloads than the other scenarios.
This may be because in the HLLR scenario the loading of the transmission branches is especially high, reducing the relative attack effort to introduce overloads.
Third, the LLNP scenario has more worst-case overloads than the LLLW scenario.
This shows that not only the load, but also the type of generation dispatch has an impact on the branch overloads.
In the LLNP scenario generation is mostly wind power in Northern Germany.
This is known to frequently necessitate redispatch measures due to a weak north-south grid connection in Germany,
hence, enabling an adversary to achieve more overloads in this scenario as compared to the high PV-availability scenario LLLW. 
In sum, we identify the HLLR scenario as representing the most challenging grid situation. We thus use it in the following to derive the optimal defensive segmentation.



\begin{figure}[ht]
  \centering
  \includegraphics[width = 0.8\linewidth]{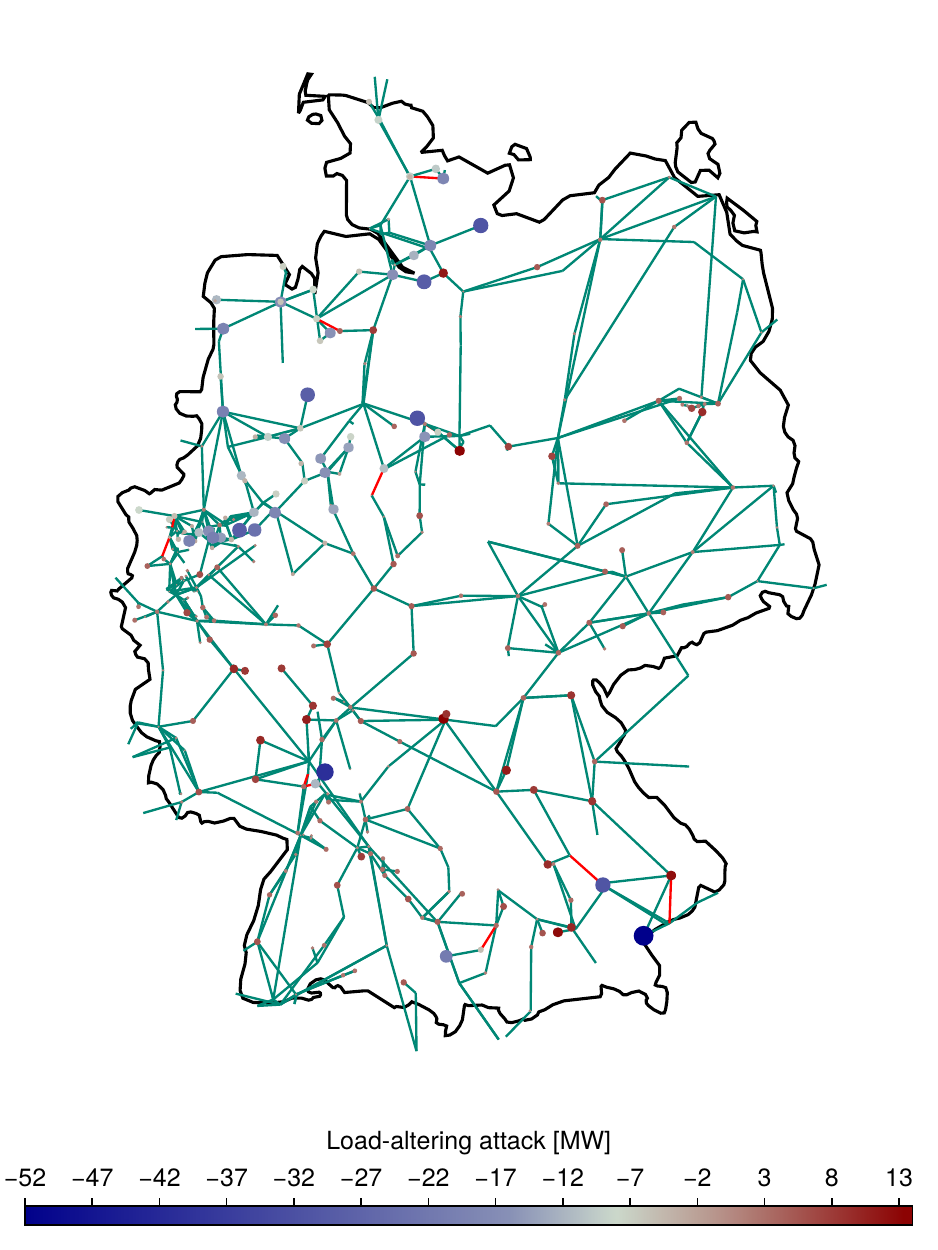}
  \caption{LAA decision and overloaded branches for the HLLR scenario and hacking budget $\text{C}^\text{Hack} = 10$. The absolute load-altering at a bus is coded by the size of the node. 
  Parallel branches are not shown individually; if one is overloaded, all are marked red.} 
  \label{fig:ta_grid}
\end{figure}

Before turning to defense design, we examine the HLLR scenario in more detail in Fig. \ref{fig:ta_grid} where the LAA decision and the overloaded branches are displayed. 
The overloaded branches are located in the highly loaded metropolitan areas in the South-East (Munich area) and the West (Rhine-Ruhr and Rhine-Main area). 
In these areas a load redistribution decision is taken by the adversary, where the load at a few buses is strongly reduced, while it is increased at others, leading to local power flows overloading the branches. 



\subsection{Near-real-world case: defense optimization}

We applied the exact CCG algorithm for defense optimization to all four scenarios, with  varying hacking budgets.
The algorithm did not terminate for this large-scale grid within 100 iterations, with only one exception. 
The LLLW scenario with settings $ \text{C}^\text{Hack}=2 $, $\text{D}=2$, and $\epsilon=10^{-4}$ did terminate after 2 iterations in \SI{294}{\sec}.
The algorithm then finds an exact optimal segmentation with 21 segments, ensuring at most one overloaded branch. 
For comparison, the iterative informed heuristic \texttt{itin\_thres\_2} and \texttt{itin\_clus\_2} need one segment more to ensure at most one overload.

\begin{figure}[ht]
  \centering
 
  \includegraphics[width = 1\linewidth]{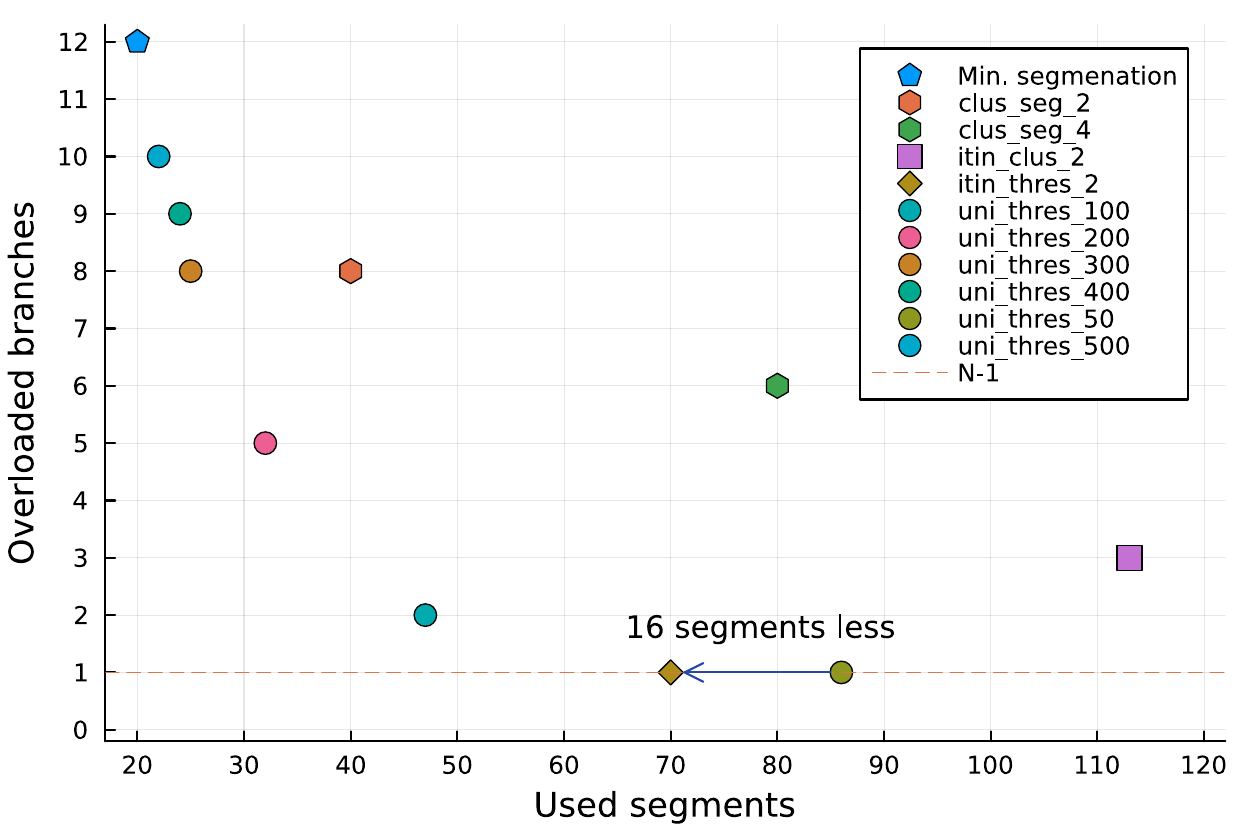}
  \caption{Number of overloaded branches versus the number of used segments for different heuristic segmentation schemes in the scenario HLLR with hacking budget $\text{C}^\text{Hack}=10$. 
  The iterative informed heuristic segmentation \texttt{itin\_thres\_2}, where every hacked segment is uniformly split, ensures at most one overloaded branch in the worst-case, while requiring 16 segments less than the uniform thresholding scheme  \texttt{uni\_thres\_50}. 
  Thresholds \texttt{CS} for the \texttt{uni\_thres\_CS} scheme are given in MW. 
  Note that 20 segments are minimally needed, one for each of the 20 modeled CSOs.} 
  \label{fig:seg_heur}
\end{figure}

Fig. \ref{fig:seg_heur} shows the number of overloaded branches versus the number of used segments for different heuristic segmentation schemes in the scenario HLLR with hacking budget $\text{C}^\text{Hack}=10$.
Three observations are made.
First, the simple uniform thresholding scheme \texttt{uni\_thres\_CS} monotonically reduces the number of overloads and increases the number of used segments with decreasing thresholds.
For $\texttt{CS}=\SI{50}{\mega\watt}$, at most one overloaded branch is ensured and 86 segments are used.
This could inform regulatory measures by indicating that \SI{50}{\mega\watt} is a suitable threshold level to not exceed N-1 margins even in case of LAAs.
Second, we observe that iterative informed scheme \texttt{itin\_thres\_2} can achieve the same security level with 16 segments less than the \texttt{uni\_thres\_50} scheme.
The latter is only \SI{23}{\percent} less efficient in terms of segmentation effort.
Since \texttt{uni\_thres\_50} might be easier to regulate and implement than \texttt{itin\_thres\_2},
this quantifies the \textit{price of simplicity}.
Third, the balanced clustering using schemes \texttt{clus\_seg\_2}, \texttt{clus\_seg\_4}, and \texttt{itin\_clus\_2} perform less effectively and efficiently. For these the penalty parameter of problem \eqref{eq:cap_clus_problem} was set to $\lambda = 10^5$. In particular, \texttt{clus\_seg\_2} needs 15 segments more than \texttt{uni\_thres\_300} to achieve at most 8 overloaded branches, and the iterative informed scheme \texttt{itin\_clus\_2} requires 113 segmentations without ensuring at most one overloaded branch in the worst-case after ten iterations of the algorithm. 
As shown in Fig. \ref{fig:ta_grid}, LAAs on the Germany grid favor local redistribution, which is not hindered by cluster-based segmentation scheme, as nearby located EVCS are grouped into the same segments.

Table \ref{tab:def_scenarios} shows how the segmentation 
that is derived with the \texttt{itin\_thres\_2} scheme for the HLLR scenario with $\text{C}^\text{Hack}=10$, see Fig. \ref{fig:seg_heur},
performs under worst-case attacks on the other scenarios.
Additionally, we evaluate the effectiveness of \texttt{uni\_thres\_100} since it leads to only two overloads for high-stress HLLR scenario and may perform better in less stressed grid situations.
The iterative informed scheme \texttt{itin\_thres\_2} ensures zero or one overloaded branch in the worst-case for the remaining scenarios.
This indicates the robustness of the resulting segmentation to changing grid conditions. 
The uniform thresholding scheme \texttt{uni\_thres\_100} also ensures at-most one overloaded brach for the low load scenarios, but fails to achieve this for higher loads.

\begin{table}[htbp]
  \centering
  \caption{Sensitivity analysis of the preventive segmentations devised on the HLLR grid-scenario.}
  \label{tab:def_scenarios}
  \begin{tabular}{lccc}
    \toprule
     & \multicolumn{2}{c}{\textbf{Overloaded branches using segmentation of}} \\
    \cmidrule(lr){2-3}
    \textbf{Scenario} &\texttt{itin\_thres\_2} & \texttt{uni\_thres\_100} \\
    \midrule
    \textbf{MLHR}    & 1 & 2 \\
    \textbf{LLLP}       &  0 &  1 \\
    \textbf{LLLW}    & 0 &  1 \\
    \bottomrule
  \end{tabular}
\end{table}




\section{Conclusion}\label{sec:conc}

In this paper we have proposed a preventive defensive segmentation of the CSOs' cyber  infrastructure to mitigate the threat of overloaded branches, in particular lines and transformers, posed by LAAs utilizing EVCSs. 
We have modeled the defense design problem as a bi-level problem, minimizing the number of necessary segments, while limiting the maximum number of overloaded branches for the worst-case LAA. 
For this bi-level optimization problem, we derived an exact CCG algorithm and heuristic solution approaches, 
which we assessed on the IEEE RTS 24-Bus system and a near-real-world case of Germany. 
The threat analysis on the near-real-world case of Germany indicates a violation of the N-1 security margin, if at least two operators are compromised. 
The evaluation of the designed defensive segmentations for this case indicates that the iterative informed heuristic using uniform splitting scheme is the most efficient heuristic, while a simpler scheme limiting the charging capacity of a segment to \SI{50}{\mega\watt} also ensures one overloaded branch, while needing 16 segments more - the \textit{price of simplicity}.
As the threat analysis has shown the necessity to design defense measures, we hope that our findings on practical, heuristic segmentation schemes inform regulatory measures prescribing a segmentation of the CSOs' cyber infrastructure to mitigate the imposed threat.


Future work can be devoted to mitigate the potential threat imposed to the power grid's voltage stability, when EVs are used in a LAA. As an adversary could manipulate the power factor of the charging process by hacking the EVs' on-board-chargers and therefore perturb the reactive power balance of the power system potentially leading to a voltage collapse. Furthermore, future work could transfer the preventive segmentation approach to mitigate threats arising from LAAs that utilize other devices than EVCSs, e.g., distributed energy resources controlled by aggregators in a virtual power plant setting.

\bibliographystyle{IEEEtran}
\bibliography{Lit.bib}

\end{document}